\documentclass[12pt]{article}
\usepackage{graphicx} 
\usepackage{amsmath} 
\usepackage{amssymb} 
\usepackage{cite}      
\usepackage{rotating}  
\usepackage{booktabs}  
\usepackage{tikz} 
\usepackage{algorithm}
\usepackage{algorithmic}

\usepackage{booktabs}  
\usepackage{graphicx}
\usepackage{amsmath}
\usepackage{amssymb}  
\usepackage{caption}
\usepackage{algorithmic}
\usepackage{framed}
\usepackage{rotating}
\usepackage{tabularx}
\usepackage{longtable}
\usepackage{wrapfig} 
\usepackage{url}      
\usepackage{hyperref} 
\usepackage{subcaption}
\setcounter{secnumdepth}{3}
\setcounter{tocdepth}{4}
\usepackage{color}
\usepackage{doc}  
\usepackage{multirow}
\usepackage{dcolumn}
\usepackage{cite}
\usepackage{listings}
\usepackage{tikz} 
\usepackage{amsthm}


\newtheorem{theorem}{Theorem}
\newtheorem{lemma}{Lemma}

\definecolor{pale}{rgb}{1,0.85,0.85}
\definecolor{pale2}{rgb}{1,0.85,0.15}


\makeatletter
\def\Ddots{\mathinner{\mkern1mu\raise\p@
  \vbox{\kern7\p@\hbox{.}}\mkern2mu
  \raise4\p@\hbox{.}\mkern2mu\raise7\p@\hbox{.}\mkern1mu}}
  \makeatother

\hoffset=-1in
\voffset=-1in
\textheight=9in
\textwidth=7in
\headsep=0in
\headheight=0in
\topskip=0in
\footskip=.5in
\oddsidemargin=0.75in
\evensidemargin=0.75in
\topmargin=.75in

\title{The Golay Code Outperforms the Extended Golay Code Under
Hard-Decision Decoding}
\author{Jon Hamkins%
\thanks{This work was done as a private effort and not in the author's
capacity as an employee of the Jet Propulsion Laboratory, California
Institute of Technology.}}

\begin{document}
\maketitle

\bigskip

\centerline{\it Submitted to: IEEE Transactions on Information Theory}
\centerline{\it Keywords: Error correction codes, performance analysis}

\bigskip
\bigskip

\begin{abstract}
We show that the binary Golay code is slightly more power efficient than
the extended binary Golay code under maximum-likelihood (ML),
hard-decision decoding.  In fact, if a codeword from the extended code
is transmitted, one cannot achieve a higher probability of correct
decoding than by simply ignoring the 24\textsuperscript{th} symbol and
using an ML decoder for the non-extended code on the first 23 symbols.
This is so, despite the fact that using that last symbol would allow one
to sometimes correct error patterns with weight four.  To our knowledge
the worse performance of the extended Golay code has not been previously
noted, but it is noteworthy considering that it is the extended version
of the code that has been preferred in many deployments.  
\end{abstract}

\section{Introduction}
The many interesting properties of the Golay codes are well-studied
\cite{Golay1949, Lin1983}, and the codes have been deployed in several
applications.  The extended binary Golay code was used for NASA's
Voyager mission during its encounters at Jupiter and Saturn
\cite{Laeser1986}.  It was also used to protect the data handling
capabilities of NASA's Magellan mission to Venus \cite{Butrica1996}, and
the nonimaging science experiments of the Galileo mission
\cite{Butrica1996}.  Outside of NASA, the extended Golay
code has been used as part of the Automatic Link Establishment protocol
ITU-R F.1110 \cite{ITU-F.1110-2.1997}, it has been used in paging
protocols \cite{ITU-900-2-1990}, and it remains a standard for telemetry
\cite{IRIG2015}.

One property that hasn't been reported in the literature, to the best of
our knowledge, is that under maximum-likelihood (ML) decoding with
hard-decisions, the extended binay Golay code, ${\cal G}_{24}$ has
slightly {\em worse} power efficiency than the binary Golay code, ${\cal
G}_{23}$.  This paper demonstrates this fact.

\section{Performance of ML decoding with hard-decisions}
\label{sec.Golayperf-hard}
In the following, we assume a binary symmetric channel, corresponding to
a receiver that makes hard decisions.  For the extended Golay code, the
literature often describes an incomplete decoder capable of correcting
three errors and detecting four errors.  Such a decoder does not
minimize the probability of codeword error, because it makes no attempt
to guess at the correct codeword when four errors are detected.  Since
we desire to compare best-achievable error rate performance, we focus on
codeword-error-rate-minimizing decoders, i.e., complete ML decoders that
always output a closest codeword.  The ML decoder offers no error
detection, and we do not address error detection in this paper.

\subsection{Binary Golay Code, ${\cal G}_{23}$}
A complete ML decoder for the (23,12,7) Golay code will produce the
correct codeword at its output whenever a hard-decision channel makes
three or fewer errors in the 23 symbols.  Thus, the codeword error
rate (CWER) is
\begin{equation}
  \mbox{CWER} = 1 - \sum_{i=0}^{3} \binom{23}{i} p^i (1-p)^{23-i}
  \label{eq.Golay-CWER}
\end{equation}
where $p$ is the channel symbol error probability.  This performance is
shown in Figure~\ref{fig.Golay-perf} for a binary-input, additive white
Gaussian noise (AWGN) channel with hard decision error probability
\begin{equation}
  p = Q\left(\sqrt{\frac{2E_s}{N_0}}\right)
  = Q\left(\sqrt{\frac{2RE_b}{N_0}}\right)
  \label{eq.BSC-p}
\end{equation}
where $R=12/23$ is the code rate.

\begin{figure}[p]
  \centerline{\includegraphics[height=0.9\textheight]{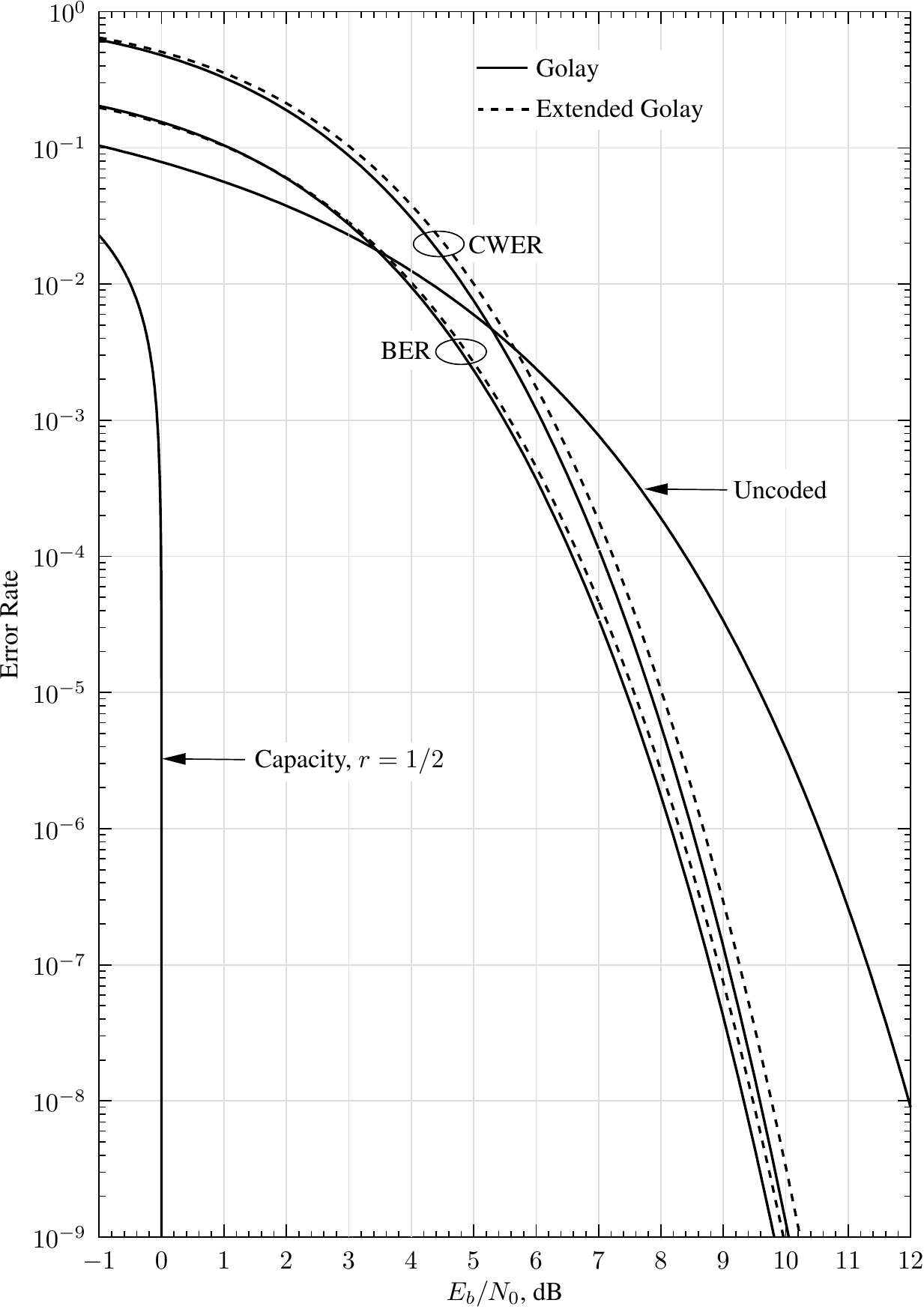}}
  \caption{Performance of the binary Golay code and extended binary
Golay code under hard-decision ML decoding, compared to capacity and
uncoded transmission.}
\label{fig.Golay-perf}
\end{figure}

The bit error-rate (BER) performance is also shown in
Figure~\ref{fig.Golay-perf}.  The BER is difficult to express
analytically when the signal-to-noise ratio (SNR) is low, but the
analysis becomes feasible at moderate to high SNR, because with high
probability each codeword error results in seven code symbol errors.
Each symbol (whether a systematic information-bearing symbol or a
parity symbol) has 7/23 chance of being in error.  Thus, at moderate
and high SNR, the bit error rate is given by
\begin{equation}
\mbox{BER} \approx \frac{7}{23} \mbox{CWER} 
\label{eq.Golay-BER}
\end{equation}
which is tight (within about 0.1 dB) when $E_b/N_0 > 1$ dB.  

\subsection{Extended Binary Golay Code, ${\cal G}_{24}$}
Determining the performance of a complete ML decoder for ${\cal G}_{24}$
is more involved because, unlike ${\cal G}_{23}$, the closest codeword
to a received vector may not be unique.  Determining decoder performance
requires knowing how many codewords can be tied in this way, and how
many bit errors are produced if the decoder guesses the wrong one.  To
help us, we start with two lemmas.

\begin{lemma} \cite{Berlekamp1972,McEliece2002}
  There is a unique codeword of ${\cal G}_{24}$ of weight eight which has ones
  in any five given positions.
  \label{lem.1-5}
\end{lemma}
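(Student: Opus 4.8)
The plan is to derive the statement from the weight enumerator of $\mathcal{G}_{24}$ together with linearity of the code. Recall that $\mathcal{G}_{24}$ is a self-dual $[24,12,8]$ code whose nonzero codeword weights are $8$, $12$, $16$, and $24$, and that it has exactly $A_8 = 759$ codewords of weight eight (the \emph{octads}); I would take these facts as known, e.g.\ from the MacWilliams identities applied to a doubly-even self-dual code, or simply from \cite{Berlekamp1972,McEliece2002}.

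First I would establish uniqueness. Let $u$ and $v$ be distinct weight-eight codewords. By linearity $u+v$ is again a codeword, and $\mathrm{wt}(u+v) = \mathrm{wt}(u) + \mathrm{wt}(v) - 2|u\cap v| = 16 - 2|u\cap v|$, where $|u\cap v|$ denotes the number of coordinates in which both are $1$. Since this weight must be nonnegative and must lie in $\{0,8,12,16,24\}$, the only possibilities are $|u\cap v|\in\{0,2,4,8\}$; the value $|u\cap v|=8$ is excluded because over the binary field two words of equal weight with equal supports coincide. Hence two distinct octads meet in at most four coordinate positions, so no two distinct octads can both contain a prescribed set of five positions. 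Therefore at most one weight-eight codeword has ones in any five given positions.

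Next I would obtain existence by a double count. Consider the incidences between the $\binom{24}{5}$ five-element subsets of coordinate positions and the $759$ octads, declaring a subset incident to an octad when it is contained in that octad's support. Counting incidences through octads gives $759\binom{8}{5} = 759\cdot 56 = 42504$, and $\binom{24}{5} = 42504$ as well, so on average each five-set lies in exactly one octad. Combined with the uniqueness step---each five-set lies in \emph{at most} one octad---this forces every five-set to lie in \emph{exactly} one, which is the assertion.

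The only substantive input is the weight enumerator of $\mathcal{G}_{24}$; granting that, the argument is a short counting exercise. The place to be careful is the intersection computation---correctly enumerating which values of $|u\cap v|$ are compatible with $16 - 2|u\cap v|$ being an allowed codeword weight, and ruling out the degenerate case $u=v$---together with verifying that $759\cdot 56$ equals $\binom{24}{5}$ exactly, since it is precisely this numerical coincidence that promotes ``at most one'' to ``exactly one.''
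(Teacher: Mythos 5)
Your argument is correct. Note that the paper does not actually prove this lemma at all---it simply cites \cite{Berlekamp1972,McEliece2002} and takes the statement as known---so any proof you give is necessarily ``a different route.'' What you have written is the classical self-contained derivation that the supports of the weight-eight codewords form the Steiner system $S(5,8,24)$: the intersection computation is right (distinct octads $u,v$ satisfy $\mathrm{wt}(u+v)=16-2|u\cap v|\in\{8,12,16\}$, forcing $|u\cap v|\in\{0,2,4\}$, so no two distinct octads share five positions), and the double count $759\binom{8}{5}=759\cdot 56=42504=\binom{24}{5}$ correctly upgrades ``at most one'' to ``exactly one.'' The only external input you need is the weight distribution $1+759x^8+2576x^{12}+759x^{16}+x^{24}$, which is uniquely forced for a doubly-even self-dual $[24,12,8]$ code, so the argument is not circular; this is no stronger an assumption than the citation the paper itself relies on. What your route buys is a proof readable without consulting the references; what it costs is having to import the weight enumerator, whereas the cited sources establish the Steiner property directly from the structure of the code.
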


\begin{lemma}
  There are five distinct codewords of ${\cal G}_{24}$ of weight
  eight which have ones in any four given positions.
\label{lem.5-4}
\end{lemma}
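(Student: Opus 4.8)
The plan is to deduce the statement from Lemma~\ref{lem.1-5} by a double-counting argument, using only the fact it provides: the weight-eight codewords of ${\cal G}_{24}$ (the octads) form a Steiner system $S(5,8,24)$, so every five coordinate positions lie in the support of exactly one octad.

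Fix a set $S$ of four coordinate positions. I would count in two ways the number $M$ of pairs $(j,C)$ in which $j$ is a coordinate position outside $S$ and $C$ is a weight-eight codeword whose support contains $S\cup\{j\}$. On one hand, for each of the $24-4=20$ choices of $j\notin S$ the set $S\cup\{j\}$ has exactly five elements, so by Lemma~\ref{lem.1-5} there is precisely one octad $C$ through it; hence $M=20$. On the other hand, group the pairs by their second coordinate: a weight-eight codeword $C$ occurs in some pair only if $S\subseteq\mathrm{supp}(C)$, and then the admissible $j$ are exactly the positions of $\mathrm{supp}(C)$ outside $S$, of which there are $8-4=4$ since $C$ has weight eight. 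Therefore $M=4N$, where $N$ is the number of weight-eight codewords whose support contains $S$, and combining the two counts gives $N=20/4=5$.

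It remains to observe that these five codewords are genuinely distinct, which follows from the same uniqueness: if two weight-eight codewords $C\neq C'$ both had supports containing $S$ and shared a further common position $j\notin S$, then $S\cup\{j\}$ would lie in two distinct octads, contradicting Lemma~\ref{lem.1-5}. Equivalently, the argument partitions the $20$ positions outside $S$ into the disjoint $4$-element sets $\mathrm{supp}(C)\setminus S$ as $C$ ranges over the octads containing $S$, so there are exactly five such octads.

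There is no substantive obstacle here beyond careful bookkeeping: the only points needing attention are that the argument invokes Lemma~\ref{lem.1-5} for the five-element sets $S\cup\{j\}$ (which requires $j\notin S$ so that $|S\cup\{j\}|=5$), and that every octad containing $S$ meets $S$ in all four of its positions so that it contributes exactly four pairs. If one prefers, the identical count can be stated as the standard replication-number recursion $\lambda_4\binom{8-4}{5-4}=\lambda_5\binom{24-4}{5-4}$ for the design $S(5,8,24)$, which with $\lambda_5=1$ again yields $\lambda_4=5$.
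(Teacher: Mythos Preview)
Your proof is correct and is essentially the same double-counting argument the paper uses: list the $20$ octads obtained by extending the four positions to five in each possible way (using Lemma~\ref{lem.1-5}), observe that each octad through the four positions is listed exactly $8-4=4$ times, and conclude there are $20/4=5$ of them. Your presentation is a bit more formal (explicitly framing it as counting pairs $(j,C)$ and noting the design-theoretic recursion), but the idea and structure match the paper's proof.
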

\begin{proof}
Let ${\bf y} = (y_0, \ldots, y_{23})$ be a vector with ones in four
given positions.  Let ${\bf y}' = {\bf y} + {\bf u}_i$, where $i$ is one
of the 20 indices for which $y_i=0$, and where ${\bf u}_i$ is a vector
with a one in the $i$\textsuperscript{th} position.  By
Lemma~\ref{lem.1-5}, there is a unique codeword of ${\cal G}_{24}$ of weight
eight which has ones in the same five positions as ${\bf y}'$.  Repeat
this argument for each of the 20 values of $i$ for which $y_i=0$.  This
yields a list of 20 codewords.  Each of these codewords occurs in the
list four times, corresponding to the four positions that ones have been
added to ${\bf y}$.  Thus, there are 20/4 = 5 distinct codewords of
weight eight which have ones in the same four positions as ${\bf y}$.
\end{proof}

We are now ready to state the distance properties we need to evaluate
the performance of a complete ML decoder for ${\cal G}_{24}$.
\begin{theorem}
\label{thm.Golay}
  For any vector ${\bf y} \in \{0,1\}^{24}$, either
  \begin{itemize}
    \item There is a unique codeword ${\bf c} \in {\cal G}_{24}$ with 
      $d({\bf c},{\bf y})\le 3$, or
    \item There are six distinct codewords ${\bf c}\in {\cal G}_{24}$ with 
      $d({\bf c},{\bf y})=4$.
  \end{itemize}
\end{theorem}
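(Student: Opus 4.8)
The plan is to pass to cosets: since $\mathcal{G}_{24}$ is linear, for any $\mathbf{y}\in\{0,1\}^{24}$ the number of codewords at Hamming distance exactly $t$ from $\mathbf{y}$ equals the number of weight-$t$ vectors in the coset $\mathbf{y}+\mathcal{G}_{24}$ (because $d(\mathbf{y},\mathbf{c})=\mathrm{wt}(\mathbf{y}+\mathbf{c})$ with addition mod $2$). So it suffices to show that every coset of $\mathcal{G}_{24}$ either contains a unique vector of weight $\le 3$, or contains exactly six vectors of weight $4$ and none of smaller weight.

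First I would record the consequences of $d(\mathcal{G}_{24})=8$. The sum of two vectors each of weight $\le 3$ has weight $\le 6$, hence can be a codeword only if it is $\mathbf{0}$; so a coset contains at most one vector of weight $\le 3$, and if it contains one then that is its unique minimum-weight vector (giving the first alternative, via the codeword $\mathbf{y}+\mathbf{e}$ where $\mathbf{e}$ is that coset leader). The same argument applied to a vector of weight $4$ and one of weight $\le 3$ (sum of weight $\le 7$) shows a coset containing a weight-$4$ vector contains no vector of weight $\le 3$; in particular the two alternatives are mutually exclusive.

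The heart of the argument is to count the weight-$4$ vectors in a coset that contains one, say $\mathbf{e}$, and this is where Lemma~\ref{lem.5-4} enters. Every other weight-$4$ vector of the coset has the form $\mathbf{e}+\mathbf{c}$ for a nonzero $\mathbf{c}\in\mathcal{G}_{24}$, and from $\mathrm{wt}(\mathbf{e}+\mathbf{c})=\mathrm{wt}(\mathbf{e})+\mathrm{wt}(\mathbf{c})-2\,|\mathrm{supp}(\mathbf{e})\cap\mathrm{supp}(\mathbf{c})|$ together with $\mathrm{wt}(\mathbf{e}+\mathbf{c})=\mathrm{wt}(\mathbf{e})=4$ we get $\mathrm{wt}(\mathbf{c})=2\,|\mathrm{supp}(\mathbf{e})\cap\mathrm{supp}(\mathbf{c})|\le 8$; since the only nonzero codeword weight of $\mathcal{G}_{24}$ that is at most $8$ is $8$ itself, this forces $\mathrm{wt}(\mathbf{c})=8$ and $\mathrm{supp}(\mathbf{c})\supseteq\mathrm{supp}(\mathbf{e})$. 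Conversely, any weight-$8$ codeword whose support contains the four positions of $\mathbf{e}$ gives a weight-$4$ vector $\mathbf{e}+\mathbf{c}$ of the coset. By Lemma~\ref{lem.5-4} there are exactly five such codewords, and the five vectors $\mathbf{e}+\mathbf{c}$ they produce are pairwise distinct and distinct from $\mathbf{e}$; hence such a coset contains exactly $1+5=6$ vectors of weight $4$.

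It remains to see that every coset failing the first alternative does contain a weight-$4$ vector, and I would obtain this by counting rather than by quoting the covering radius. There are $2^{12}=4096$ cosets; the $\binom{24}{0}+\binom{24}{1}+\binom{24}{2}+\binom{24}{3}=2325$ vectors of weight $\le 3$ lie in $2325$ distinct cosets by the second paragraph, so $4096-2325=1771$ cosets contain no vector of weight $\le 3$. Each of the $\binom{24}{4}=10626$ weight-$4$ vectors lies in one of these $1771$ cosets, and by the previous paragraph each such coset contains either $0$ or exactly $6$ of them; since $6\cdot 1771=10626$, every one of the $1771$ cosets must contain exactly six weight-$4$ vectors (and hence has minimum weight $4$, which incidentally recovers the fact that $\mathcal{G}_{24}$ has covering radius $4$). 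Translating back: if $\mathbf{y}$ is not within distance $3$ of $\mathcal{G}_{24}$, its coset is one of these $1771$, and the six weight-$4$ vectors $\mathbf{e}_1,\dots,\mathbf{e}_6$ in it yield six distinct codewords $\mathbf{y}+\mathbf{e}_1,\dots,\mathbf{y}+\mathbf{e}_6$ at distance exactly $4$, which is the second alternative. I expect the only genuinely delicate point to be the weight-expansion case analysis in the third paragraph — in particular verifying that the five codewords supplied by Lemma~\ref{lem.5-4} really do give five \emph{distinct} new vectors; the rest is bookkeeping with binomial coefficients.
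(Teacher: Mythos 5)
Your proof is correct, and it differs from the paper's in one substantive step. The uniqueness claim and the count of six are essentially the paper's argument transplanted into coset language: the paper invokes the triangle inequality and minimum distance eight where you observe that two vectors of weight $\le 3$ (or one of weight $4$ and one of weight $\le 3$) cannot lie in the same coset, and both proofs reduce the count of distance-$4$ codewords to Lemma~\ref{lem.5-4} after translating so that the weight-$4$ coset leader plays the role of ${\bf y}$. (Your weight-expansion computation showing that any codeword at distance $4$ from a weight-$4$ vector is either ${\bf 0}$ or a weight-$8$ codeword covering its support is a step the paper leaves implicit; making it explicit is a genuine improvement in rigor.) The real divergence is the existence step: the paper shows every ${\bf y}$ is within distance $4$ of ${\cal G}_{24}$ by puncturing to the first $23$ coordinates, using that ${\cal G}_{23}$ is perfect with covering radius $3$, and re-appending the parity symbol; you instead count cosets, showing $4096-2325=1771$ cosets have no leader of weight $\le 3$ and that $6\cdot 1771=\binom{24}{4}$ forces each of them to contain exactly six weight-$4$ vectors. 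The paper's route is shorter and reuses a standard fact about the non-extended code; yours is self-contained (never mentioning ${\cal G}_{23}$), and as a byproduct establishes that the covering radius of ${\cal G}_{24}$ is $4$ and that the deep cosets partition the weight-$4$ vectors into classes of six --- a slightly stronger statement than the theorem itself. Either version is acceptable.
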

\begin{proof}
If there is a codeword ${\bf c}$ with $d({\bf c},{\bf y})\le 3$, 
then it must be unique, for it there were two codewords ${\bf c}^{(1)}$
and ${\bf c}^{(2)}$ of ${\cal G}_{24}$ each within distance three of
${\bf y}$, then
\begin{equation}
  d({\bf c}^{(1)},{\bf c}^{(2)}) \le d({\bf c}^{(1)},{\bf y}) + d({\bf
  y},{\bf c}^{(2)}) \le 3 + 3 = 6.
\end{equation}
Since ${\cal G}_{24}$ has minimum distance eight, it must be that
${\bf c}^{(1)}={\bf c}^{(2)}$.

Now suppose there is no codeword in ${\cal G}_{24}$ within distance
three of ${\bf y}$.  There is a codeword of ${\cal G}_{23}$ at distance
at most three of $(y_0, \ldots y_{22})$ and adding a parity to this
codeword produces a codeword of ${\cal G}_{24}$ at distance at most four
from ${\bf y}$.  Since the distance is not three or less, it must be
four.

We now determine the number of codewords of ${\cal G}_{24}$ at distance
four from ${\bf y}$.  Since the code is linear, we lose no generality by
assuming that one of the nearest codewords is the all-zero codeword, and
thus, $w({\bf y})=4$.  By Lemma~\ref{lem.5-4}, there are five distinct
codewords of weight eight and distance four from ${\bf y}$.  Thus,
together with the all-zero codeword, there are six codewords of ${\cal
G}_{24}$ which are distance four from ${\bf y}$.
\end{proof}

To summarize, the ML decoder for ${\cal G}_{24}$ will find a unique
codeword if there is a codeword within distance three from the received
vector, and otherwise it will output one of the six codewords at
distance four (and it will have a 1/6 chance of being correct in that
case).  Thus, the codeword error rate is given by
\begin{equation}
  \mbox{CWER}  = 1 
  - \frac{1}{6}\binom{24}{4}p^4(1-p)^{20}
  - \sum_{i=0}^{3} \binom{24}{i} p^i (1-p)^{24-i}
  \label{eq.extended-Golay-CWER}
\end{equation}
This performance is shown in Figure~\ref{fig.Golay-perf}.  At
moderate to high SNR, when a codeword error is made then with high
probability it results in exactly eight code symbol errors.  So on
average, the code symbols have 8/24 = 1/3 chance of being in error.
When the decoder detects four errors, if it randomly selects
from among the six codewords at distance four, this 1/3 average
applies equally to the systematic and parity bits, in which case at
moderate and high SNR the bit error rate would be
\begin{equation}
  \mbox{BER}  \approx \frac{1}{3} \times\mbox{CWER}
  \label{eq.extended-Golay-BER}
\end{equation}
But the decoder need not randomly select from among the six codewords
at distance four.  Instead, the decoder may select a codeword at
distance four whose systematic bits agree with the systematic bits of
the received vector in the most number of positions.  This does not
affect the CWER, but when a codeword error is made, there are fewer
systematic bit errors than parity symbol errors, on average.
%
%
In fact, a simulation indicates that only about 3.1 of the twelve systematic
bits are in error per codeword in error (instead of four in twelve for the
decoder which randomly selects the codeword at disance four), or
\begin{equation}
  \mbox{BER}  \approx 0.26\times \mbox{CWER}
  \label{eq.extended-Golay-BER2}
\end{equation}
The performance of this decoder is shown in Figure~\ref{fig.Golay-perf}.
At BER$=10^{-6}$, ${\cal G}_{24}$ has a coding gain of 2.1 dB, and a gap
of 8.4 dB to the capacity of rate 1/2 coding on an uncontrained-input
channel.

The analysis above assumes the decoder is required to output a codeword.
If all one cares about is BER, and not producing a valid decoded
codeword, the BER can be improved further.  When four errors are
detected, one can simply output the received systematic bits exactly as
they were received from the channel.  This is unlikely to be fully
correct, but on average these bits contain only half of the four errors,
or 2 bit errors per codeword.  This compares favorably to the decoder
above, which when faced with 4 channel errors decodes to the correct
codeword 1/6 of the time and the other 5/6 of the time produces on
average 3.1 bit errors, or about 2.6 bit errors per codeword.

\subsection{Comparison of ${\cal G}_{23}$ and ${\cal G}_{24}$
Performance}
\label{sec.G23-vs-G24}
One might expect that, because the minimum distance of ${\cal
G}_{24}$ is higher than that of ${\cal G}_{23}$ (8 vs.\ 7), it is a
better code that is more efficient at correcting errors.  Remarkably,
when the channel makes hard decisions, this is not the case, even at
high SNR!  Figure~\ref{fig.Golay-perf} shows that ${\cal G}_{23}$
performs better than ${\cal G}_{24}$ by about 0.2 dB, but it is
helpful to explain why.  The reason is that transmitting the parity
bit uses slightly more energy than is saved by being able to correct
1/6 of the weight-four error patterns.  

To see this, suppose codewords of ${\cal G}_{24}$ are transmitted on a
binary symmetric channel with cross-over probability $p$.  We compare
two decoders:
\begin{itemize}
  \item Decoder $D_{23}$ is a complete ML decoder for ${\cal G}_{23}$,
    and ignores the 24\textsuperscript{th} symbol.
  \item Decoder $D_{24}$ is a complete ML decoder for ${\cal G}_{24}$.
\end{itemize}
Which decoder has a better CWER?  We can answer this by comparing two
quantities.
\begin{enumerate}
  \item {\em Error patterns that decoder $D_{24}$ corrects that decoder $D_{23}$
    does not.}\\
If the channel makes four errors in the first 23 symbols and the
24\textsuperscript{th} symbol is received correctly, then 
$D_{23}$ will decode in error, and 
with probability 1/6 $D_{24}$ will decode correctly.  Thus, the probability
$D_{24}$ is correct and $D_{23}$ is not, is
\begin{equation}
  \frac{1}{6}\binom{23}{4} p^4(1-p)^{20}
\end{equation}
\item {\em Error patterns that decoder $D_{23}$ corrects that decoder $D_{24}$
  does not.}\\
If the channel makes three errors in the first 23 symbols and the
24\textsuperscript{th} symbol is also received in error, decoder
$D_{23}$ will find the correct answer, and with probability 5/6
decoder $D_{24}$ will not find the correct answer.  Thus, the
probability $D_{23}$ is correct and $D_{24}$ is not, is
\begin{equation}
  \frac{5}{6}\binom{23}{3} p^4(1-p)^{20}
\end{equation}
\end{enumerate}
In all other situations, either both decoders find the correct
codeword, or both decoders produce a codeword error.  Since
\begin{equation}
  5\binom{23}{3} = \binom{23}{4} = 8855
\end{equation}
it follows that for any given $p$, decoder $D_{23}$ and $D_{24}$ have
the identical CWER!  Thus, if a codeword from ${\cal G}_{24}$ is
transmitted, we cannot do better than simply ignoring the
24\textsuperscript{th} symbol and using the complete ML decoder for
${\cal G}_{23}$ on the first 23 symbols.  This is so, despite the fact
that using that last symbol would allow us to sometimes correct error
patterns with weight four---that advantage is exactly balanced by
the chance that the last symbol will be received in error and prevent
proper decoding.

The story is more nuanced if we compare the BER.  By comparing
(\ref{eq.Golay-BER}) to (\ref{eq.extended-Golay-BER}), we see that for
any given $p$, the random-ML-codeword decoder for ${\cal G}_{24}$ is
{\em worse} than the ML decoder for ${\cal G}_{23}$ by a factor of
$(1/3)/(7/23) = 23/21$.  On the other hand, for a given value of $p$,
the carefully designed BER-minimizing decoder for $D_{24}$ discussed
above (see (\ref{eq.extended-Golay-BER2})) has a lower BER than that of
decoder $D_{23}$, since $0.26 < 7/23$.

This analysis addresses the question of what to do if the
24\textsuperscript{th} symbol has been transmitted.  Since the CWER is
the same whether we make use of it or not, we are even better off by not
transmitting it at all.  This allows a savings in energy of of
$10\log_{10} 24/23 \approx 0.18$ dB.  This is why the hard-decision CWER
curves in Figure~\ref{fig.Golay-perf} are separated by exactly this
amount at all SNRs.  The BER curves, as expected, are slightly closer
together, but ${\cal G}_{23}$ is still seen to be better than ${\cal
G}_{24}$ by about 0.13 dB.

We take care to note that ${\cal G}_{23}$ outperforms ${\cal G}_{24}$
only on a {\em hard-decision} channel.  Under ML soft-decision decoding,
${\cal G}_{24}$ outperforms ${\cal G}_{23}$, which can be verified by
simulation or, at high SNR, by comparing the union bound expressions for
the two codes.

\bibliographystyle{IEEEtran}
\bibliography{master}
\end{document}